\newcommand{\mc}{\mathcal}
\newcommand{\mb}{\mathbb}
\newcommand{\X}{\mc{X}}
\newcommand{\la}{\langle}
\newcommand{\ra}{\rangle}
\DeclareMathOperator*{\amin}{argmin}
\DeclareMathOperator*{\E}{\mathbb{E}}
\DeclareMathOperator{\R}{\mathbb{R}}
\DeclareMathOperator*{\proj}{proj}
\newcommand{\query}{\delta}
\newcommand{\num}{n}
\newtheorem{assumption}{Assumption}
\newtheorem{theorem}{Theorem}
\newtheorem{lemma}{Lemma}
\newtheorem{corollary}{Corollary}
\title{Improved Rates for Derivative Free Gradient Play\\ in Strongly Monotone
    Games\thanks{Drusvyatskiy's research was supported by NSF
    DMS-1651851 and CCF-2023166 awards. Ratliff's research was supported by NSF CNS-1844729 and Office of Naval Research YIP Award N000142012571. Fazel's research was supported in part by awards NSF TRIPODS II-DMS 2023166, NSF TRIPODS-CCF 1740551, and NSF CCF 2007036.}
}
\author{Dmitriy Drusvyatskiy\thanks{Drusvyatskiy is in the Department
    of Mathematics
    at the University of Washington, Seattle. email: {\tt ddrusv@uw.edu}.}, Maryam
Fazel\thanks{Ratliff and Fazel are in the Department of Electrical and
Computer Engineering at the University of Washington, Seattle.
email: {\tt$\{$ratliffl,mfazel$\}$@uw.edu}.}, Lillian J.
Ratliff$^\ddag$}
\date{}
\begin{document}

\maketitle

\begin{abstract}
The influential work of~\citet{bravo2018bandit} shows that derivative free
gradient play in strongly monotone games has complexity $O(d^2/\varepsilon^3)$, where $\varepsilon$ is the target accuracy on the expected squared distance to the solution. This paper shows that the efficiency estimate is actually
$O(d^2/\varepsilon^2)$, which reduces to the known efficiency guarantee for the
method in unconstrained optimization. The argument we present  simply
interprets  the method as stochastic gradient play on a slightly perturbed strongly monotone game 
to achieve the improved rate.
\end{abstract}

\section{Introduction}

Game theoretic abstractions are foundational in many application domains ranging from machine
learning to reinforcement learning to control theory. For instance, in machine learning, game theoretic abstractions
are used to develop solutions to learning from
adversarial or otherwise strategically generated data (see, e.g.,
\cite{madry2018towards,fiez2020implicit,goodfellow2014generative,narang2022multiplayer}). Analogously,
in reinforcement learning and control theory, game theoretic abstractions are used to
develop robust algorithms and policies. 
 (see, e.g.,
\cite{ratliff2016characterization,zhou2017mirror,ratliff2020adaptive,li2011designing,yekkehkhanyadversarial,Zhang2019MultiAgentRL,zhang2019policy}). 
Additionally, game theoretic abstractions are used to capture interactions between
multiple decision making entities 
and to model asymmetric information and
incentive problems  (see, e.g.,
\cite{ratliff2020adaptive,li2011designing,savas2019incentive}).

In such game theoretic abstractions, each decision-maker or `player' faces an
optimization problem that is dependent on the decisions of other players in the
game. Learning as a tool for finding equilibria, or explaining how players in a game
arrive at an equilibrium through a process of t\^atonnement, is a long studied
phenomenon~\cite{fudenberg1998theory,cesa2006prediction}. 
Gradient-based learning algorithms form a very natural class of
learning algorithms for games on continuous actions spaces with sufficiently
smooth cost functions. 
Additionally, in both control theory and machine learning, 
typically gradient-based methods are used since they scale well.

There is an extensive body of literature---too vast to cite all relevant work---analyzing
stochastic gradient play and its variants in different classes of games, ranging from
zero-sum to general-sum. The majority of the work on stochastic gradient play
assumes access to a gradient oracle that provides an unbiased estimate
of each player's individual gradient---i.e., the partial gradient of a player's cost with
respect to their own choice variable.

Counter to this, we are motivated by settings in which non-cooperative players interact in extremely low-information  environments: in this
paper, we examine the long-run behavior of learning with so-called ``bandit
feedback'' in strongly monotone games. Specifically,  players have access only
to a loss function
oracle, and use responses to queries (of the loss function value) to construct a gradient estimate. The
bandit feedback setting has been studied extensively in the single player
case \cite{agarwal2010optimal,shamir2017optimal,flaxman2004online,nesterov2017random}, and over the last few years it has been extended to the multi-agent
setting \cite{bravo2018bandit,tatarenko2020bandit,tatarenko2019learning}. 

In particular, the influential work of \citet{bravo2018bandit} studies derivative free ``gradient play'' wherein players
formulate a gradient estimate using a  single-point query to their loss
function. While in a general game such algorithms (even with perfect gradient information) may not converge, 
for strongly monotone games, which admit a unique Nash equilibrium, the authors show convergence to the Nash equilibrium. Moreover, they show that the iteration complexity is  $O(d^2/\varepsilon^3)$, where $\varepsilon$ is the target accuracy on the expected squared distance to the solution, and $d$ is the problem dimension. 
It was conjectured in \cite{bravo2018bandit} that this rate should match that of single player optimization,
which is known to be $O(1/\varepsilon^{2})$ in terms of target accuracy.

In this paper, we resolve this open question by showing that the iteration complexity
is in fact $O(d^2/\varepsilon^2)$. Our proof deviates significantly from the analysis
in \citet{bravo2018bandit}. In particular, we take the unique perspective that
the update players are executing is simply stochastic gradient play on a
slightly perturbed strongly monotone game, 
and this tighter analysis leads to the optimal rate result.

\section{Problem Setup and Algorithm}

In this paper, we consider an $\num$-player game defined by cost functions 
 $f_i\colon\X_i\to \mb{R}$ and sets of strategies  $\X_i\subset\R^{d_i}$. Thus
 each player $i\in \{1,\ldots, \num\}$ seeks to solve the problem
\[\min_{x_i\in \X_i}f_i(x_i,x_{-i}),\]
where $x_{-i}$ denotes the actions of all the players excluding player $i$.  A
vector of strategies $x^\star=(x_1^\star,\ldots, x_\num^\star)$ is a {\em Nash equilibrium}  if 
 each player $i$ has no incentive to unilaterally change their strategy, that is 
 \[x_i^\star\in \amin_{x_i\in \X_i}f_i(x_i,x_{-i}^\star).\]
 Throughout, the symbol $\nabla_i(\cdot)$  denotes the partial derivative of the
 argument
 $(\cdot)$ with
respect to $x_i$.  Set $d=\sum_{i=1}^{\num} d_i$, and let $\mathbb{S}_i$ and $\mathbb{B}_i$ denote the unit sphere and unit ball in $\mathbb{R}^{d_i}$, respectively.
Additionally, throughout we impose the following convexity and smoothness assumptions.

\begin{assumption}[Standing]\label{assump:convex}
{\rm
	There exist constants $\alpha,\beta, L\geq 0$ such that for each $i\in
    \{1,\ldots,\num\}$, the following hold:
    \begin{enumerate}[itemsep=0pt,topsep=5pt]
        \item[(a)] \label{it1} The set $\X_i \subset \mb{R}^{d_i}$ is closed and convex, and there exist constants $r,R>0$ 
            satisfying $r\mb{B}\subseteq \X\subseteq R\mb{B}$ where
            $\mc{X}:=\mc{X}_1\times \cdots \times \mc{X}_n$.
        \item[(b)] \label{it2} The function $f_i(x_i,x_{-i})$ is convex and $C^1$-smooth in $x_i$ and the gradient 
		$\nabla_{i} f_i(x)$ is $\beta$-Lipschitz continuous in $x$.
    \item[(c)] \label{it3} The Jacobian of the map $\nabla_i f_i(x)$ is $L$-Lipschitz continuous, meaning
		\[\|\nabla (\nabla_i f_i)(x)-\nabla (\nabla_i f_i)(x')\|_{\rm op}\leq
        L\|x-x'\|,\quad \forall\ x,x'\in\X.\]
    \item[(d)] \label{it4} The map defined by the vector of individual partial
        gradients,
        \[g(x):=(\nabla_{1}f_1(x),\nabla_{2}f_2(x),\ldots,
        \nabla_{\num}f_\num(x)),\] is $\alpha$-strongly monotone on
        $\X$, meaning that the following inequality holds:
		\[\langle g(x)-g(x'),x-x'\rangle\geq \alpha \|x-x'\|^2\qquad \forall
        x,x'\in \X.\]
    \item[(e)]  \label{it5} For each $i\in \{1,\ldots,\num\}$ and $x\in \X$, the function $f_i$ satisfies $|f_i(x)|<\infty$
        and we set $F_\ast:=\max_i \max_{x\in \mc{X}}|f_i(x)|^2$.
	\end{enumerate}}
\end{assumption}

Items (a)--(b) and (d)--(e) are identical to those in
\citet{bravo2018bandit}. In contrast, item (c) is not assumed in
\citet{bravo2018bandit}, but will be important in what follows. We note that
when applied to the single player setting $\num=1$, none of our results require
item (c) and it may be dropped entirely.

Classical results, such as those in the seminal work by \citet{rosen1965existence}, guarantee that the game admits a unique Nash equilibrium under Assumption~\ref{assump:convex}. 
In this work, we study a derivative-free algorithm proposed by \citet{bravo2018bandit}
for finding the Nash equilibrium of the game.  We note that the gradient
estimator used by \citet{bravo2018bandit} is motivated by the analogous
construction introduced by \citet{flaxman2004online} for the single player setting. 
 The procedure is recorded as Algorithm~\ref{algo:bravo}.

 \RestyleAlgo{ruled}
 \begin{algorithm}[t!]
    \caption{Derivative Free Gradient Play~\citep{bravo2018bandit}: \texttt{DFO}$(x_0, \{\eta_t\}_{t\geq 1},\delta, T)$}\label{algo:bravo}
    \KwInput{Horizon $T\in\mathbb{N}$, step-sizes $\eta_t>0$, radius $\delta\in(0,r)$, initial strategies $x^0\in (1-\delta)\X$.}
    \For{$t=0,\ldots,T-1$}{
        \For{$i=1,\ldots, n$}{
            Sample $v_{i}^{t}\in {\mathbb S}_i$ uniformly at random\;
            
            Play $x^{t}_{i}+\query v_{i}^{t}$\;

            Compute $\hat g_i^t=\frac{d_i}{\query}f_i(x_{i}^t+\query
      v_{i}^t,x_{-i}^t+\query v_{-i}^t)v_{i}^t$\;
      
      Update $x^{t+1}_i=\proj_{(1-\delta)\X_i}\left(x_i-\eta_t \hat
      g_i^t\right)$
}
}
\KwOutput{$x^T=(x_1^T,\ldots, x_n^T)$}
\end{algorithm}

In each iteration $t$, Algorithm~\ref{algo:bravo} samples $v^{t}\in
\mathbb{S}_1\times\cdots\times \mathbb{S}_\num$ uniformly at random and then declares 
\begin{equation}
x^{t+1}=\proj_{(1-\query)\X}(x^t-\eta_t \widehat{g}_{t})
    \label{eq:zo_plain_F}
\end{equation}
where  \[\widehat{g}_t:=(\hat g_1^t,\ldots, \hat g_\num^t).\] 
The reason for
projecting onto the set $(1-\delta)\X$ is simply to ensure that in the next
iteration $t+1$, the action profile is valid in the sense that
$x^{t+1}_{i}+\query v_{i}^{t+1}$  lies in $\X_i$ for each player
$i\in\{1,\ldots, n\}$. 

\citet{bravo2018bandit} show that with appropriate parameter choices,
 Algorithm~\ref{algo:bravo} will find a point $x$ satisfying $\E[\|x-x^\star\|^2]\leq
\varepsilon$ after $O(\frac{d^2}{\varepsilon^3})$ iterations, and leave it as an open question if this result is tight. We provide a different convergence argument that yields an improved
 efficiency estimate $O(\frac{d^2}{\varepsilon^2})$. The estimate matches the known rate of
 convergence of the method for unconstrained optimization problems (i.e., $\num=1$ and $\X=\mathbb{R}^d$) established by
 \citet{agarwal2010optimal}.\footnote{Formally, the paper of
     \citet{agarwal2010optimal} shows  the bound of
     $O\left(\frac{d^2}{\varepsilon^2}\right)$ on the squared distance of the average iterate to the minimizer because the analysis is based on bounding the regret.} We note that compared to  \cite{bravo2018bandit} our results do rely on a slightly stronger assumption on the second-order smoothness of the loss functions, summarized in item \ref{it3} of Assumption~\ref{assump:convex}. This assumption is not needed in the single player setting $\num=1$.
 \begin{theorem}[Informal]\label{thm:informal_main}
 For sufficiently small $\varepsilon>0$, there exists a choice of $\delta>0$ and
 $\eta_t>0$ such that Algorithm~\ref{algo:bravo}  generates a sequence $x^t$
 satisfying \[\E\|x^t-x^\star\|^2\leq \varepsilon\] for $t$ on the order of $\frac{d^2}{\varepsilon^2}$.
 \end{theorem}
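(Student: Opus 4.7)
The plan is to recognize the update rule~\eqref{eq:zo_plain_F} as a genuine, unbiased, projected stochastic gradient iteration for an auxiliary \emph{perturbed game}, and then apply the classical iteration complexity bound for strongly monotone variational inequalities. Concretely, for each player $i$, define the smoothed loss
\[
\tilde f_i^{\delta}(x) \;:=\; \E_{w_i\in\mathbb{B}_i,\,v_{-i}\in\mathbb{S}_{-i}}\!\left[f_i\!\bigl(x_i+\delta w_i,\;x_{-i}+\delta v_{-i}\bigr)\right],
\]
where $w_i$ is uniform on the unit ball and each $v_j$ ($j\neq i$) is uniform on the unit sphere. A standard Stokes' theorem computation, applied to the ball-smoothing in the $i$-th block with $x_{-i}+\delta v_{-i}$ held fixed, gives
\[
\nabla_i \tilde f_i^{\delta}(x) \;=\; \frac{d_i}{\delta}\,\E_{v}\!\left[f_i(x+\delta v)\,v_i\right] \;=\; \E[\hat g_i^t \mid x^t=x],
\]
so Algorithm~\ref{algo:bravo} is exactly projected stochastic gradient play on the perturbed game $\{\tilde f_i^{\delta}\}$ over the shrunken domain $(1-\delta)\X$, with exact pseudo-gradient $\tilde g^{\delta}(x):=(\nabla_i \tilde f_i^{\delta}(x))_{i=1}^{\num}$.

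The next step is to show this perturbed game inherits strong monotonicity and has a Nash equilibrium $\tilde x^\star$ at distance $O(\delta)$ from $x^\star$. Write $\tilde g^{\delta}=g+h^{\delta}$ with $h_i^{\delta}(x)=\E_w[\nabla_i f_i(x+\delta w)]-\nabla_i f_i(x)$. A first-order Taylor expansion combined with the $L$-Lipschitz Jacobian assumption (item~(c) of Assumption~\ref{assump:convex}) gives the pointwise estimate $\|h^{\delta}(x)\|=O(\delta^2 L)$---since the zero-mean smoothing annihilates the linear term---and, crucially, the refined Lipschitz bound $\|h^{\delta}(x)-h^{\delta}(x')\|\leq O(\delta L)\|x-x'\|$ obtained by the same argument applied to the map $\phi_w(\cdot):=\nabla_i f_i(\cdot+\delta w)-\nabla_i f_i(\cdot)$, whose differential has operator norm at most $L\delta\|w\|$. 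Consequently $\tilde g^{\delta}$ is $(\alpha-O(\delta L))$-strongly monotone, and in particular at least $\alpha/2$-strongly monotone once $\delta$ falls below a threshold depending only on $\alpha,L,\num$. Combining the pointwise perturbation bound with the elementary estimate $\mathrm{dist}(x^\star,(1-\delta)\X)\leq \delta R$ and the standard strong-monotone VI comparison argument then yields $\|\tilde x^\star-x^\star\|=O(\delta)$.

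Finally, I would invoke the classical convergence bound for projected stochastic gradient on strongly monotone variational inequalities with step sizes $\eta_t=\Theta(1/(\alpha t))$. Boundedness of the losses (item~(e)) gives the second-moment bound $\E\|\widehat g_t\|^2\leq d^2 F_\ast/\delta^2$, and the standard recursion produces
\[
\E\|x^T-\tilde x^\star\|^2 \;=\; O\!\left(\frac{d^2 F_\ast}{\alpha^2\,\delta^2\,T}\right).
\]
Combining with the bias estimate of the previous step via the triangle inequality gives $\E\|x^T-x^\star\|^2 \leq O(d^2/(\delta^2 T)) + O(\delta^2)$. Balancing by choosing $\delta\asymp \sqrt{\varepsilon}$ and $T\asymp d^2/\varepsilon^2$ drives both terms to $O(\varepsilon)$, matching the claim.

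The principal technical obstacle is the Lipschitz estimate on $h^{\delta}$ in Step~2: a naive bound via $\beta$-smoothness of each $\nabla_i f_i$ only gives an $O(\beta)$ perturbation, which is far too large to preserve strong monotonicity. It is precisely the Lipschitz-Jacobian hypothesis in item~(c) of Assumption~\ref{assump:convex} that forces the antisymmetric first-order Taylor term to cancel in expectation and yields a Lipschitz constant of $h^{\delta}$ that \emph{vanishes} with $\delta$; this is what allows the perturbed game to enjoy the same $\Theta(\alpha)$-strong monotonicity as the original and is the source of the improved $\varepsilon^{-2}$ rate over the $\varepsilon^{-3}$ rate of~\citet{bravo2018bandit}.
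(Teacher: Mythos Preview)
Your proposal is correct and follows essentially the same route as the paper: interpret Algorithm~\ref{algo:bravo} as unbiased projected stochastic gradient play on the smoothed game, use the Lipschitz-Jacobian hypothesis (item~(c)) to show that the perturbation $h^\delta=g^\delta-g$ has Lipschitz constant $O(\delta L\sqrt{\num})$ so that strong monotonicity of order $\alpha$ is preserved for small $\delta$, bound $\|x^\star-x_\delta^\star\|=O(\delta)$ via a VI comparison handling both the domain shrinkage and the loss perturbation, and finish with the standard $O(1/t)$ strongly-monotone SGD rate balanced against the $O(\delta^2)$ bias by taking $\delta\asymp\sqrt{\varepsilon}$. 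One small expository point: the $O(\delta L)$ Lipschitz bound on $h^\delta$ does not actually rely on any first-order Taylor cancellation---it comes directly from the operator-norm estimate $\|\nabla(\nabla_i f_i)(x+\delta w)-\nabla(\nabla_i f_i)(x)\|_{\mathrm{op}}\le L\delta\|w\|$ that you yourself state; the zero-mean cancellation is only relevant to your sharper $O(\delta^2)$ pointwise bound, which the paper does not need (it uses the cruder $\|g-g^\delta\|\le\beta\delta\num$ from $\beta$-smoothness alone).
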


 We provide the formal statement of the main result as well as the proof in the next section. A key idea is to interpret Algorithm~\ref{algo:bravo} as stochastic gradient play applied to a slightly perturbed strongly monotone game. 
 
 \section{Main Result}
 
 The starting point is the very motivation for the update
 \eqref{eq:zo_plain_F}, which is that the vector $\widehat{g}_t$ is an unbiased estimator of the gradient map for a \emph{different} game. Namely, for each player $i$, define the smoothed cost
\[
f_i^\query(x_i,x_{-i}):=\E_{w\sim U_i}f_{i}(x_i+ \delta w_i,x_{-i} +\delta w_{-i}),
\]  
where $U_i$ denotes the uniform distribution on \[\mathbb{B}_i\times
\left(\bigtimes_{j\neq i} {\mathbb S}_j\right).\] Thus $w$ is a vector of size
$d$, where recall that $d=\sum_{i=1}^n d_i$. The following is proved in  \cite[Lemma C.1]{bravo2018bandit} and follows closely the argument in \cite{flaxman2004online}.

\begin{lemma}[Unbiased gradient estimator]
    For each index $i\in\{1,\ldots, n\}$, let  $v_{i}$ be sampled uniformly from
    $\mathbb{S}_i$ and define the random vector \[\hat
        g_i=\frac{d_i}{\query}f_i(x_{i}+\query v_{i},x_{-i}+\query
    v_{-i})v_{i}.\] The following equality holds,
    \[\E[\hat g_i]=\nabla_i f^{\delta}_i(x).\]
\end{lemma}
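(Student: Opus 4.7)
The plan is to reduce the multi-player statement to a one-variable identity of the Flaxman et al.\ type by conditioning on the opponents' sphere samples, and then assemble the pieces with Fubini.

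The one-variable identity I would prove first is: for any continuous $h\colon\R^{d_i}\to\R$ and any $y\in\R^{d_i}$,
\[
\nabla_y \,\E_{w\sim U(\mb{B}_i)}\, h(y+\delta w)\;=\;\frac{d_i}{\delta}\,\E_{v\sim U(\mb{S}_i)}\bigl[h(y+\delta v)\,v\bigr].
\]
To derive this, I would rewrite the left side, via the change of variables $u=\delta w$, as $\vol(\delta\mb{B}_i)^{-1}\,\nabla_y \int_{y+\delta\mb{B}_i} h(z)\,dz$, and then apply the divergence theorem to the moving domain $y+\delta\mb{B}_i$ (whose outward unit normal at a boundary point $z$ is $(z-y)/\delta$). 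The resulting surface integral over $\partial(y+\delta\mb{B}_i)$ becomes a uniform-sphere expectation, and the $d_i/\delta$ prefactor emerges from combining $\vol(\delta\mb{B}_i)=\delta^{d_i}\vol(\mb{B}_i)$, the surface scaling $\vol(\delta\mb{S}_i)=\delta^{d_i-1}\vol(\mb{S}_i)$, and the standard identity $\vol(\mb{S}_i)/\vol(\mb{B}_i)=d_i$ for the unit ball.

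To finish, fix $i$ and $x\in\X$ and define $h(y):=\E_{v_{-i}}\bigl[f_i(y,\,x_{-i}+\delta v_{-i})\bigr]$, where each $v_j$ for $j\neq i$ is drawn uniformly and independently from $\mb{S}_j$. Fubini gives $f_i^\delta(x_i,x_{-i})=\E_{w_i\sim U(\mb{B}_i)}\,h(x_i+\delta w_i)$, so applying the one-variable identity at $y=x_i$ and passing the outer expectation back inside yields
\[
\nabla_i f_i^\delta(x)\;=\;\frac{d_i}{\delta}\,\E_{v_i}\E_{v_{-i}}\bigl[f_i(x_i+\delta v_i,\,x_{-i}+\delta v_{-i})\,v_i\bigr]\;=\;\E[\hat g_i].
\]
The one place that requires care is justifying the exchange of $\nabla_y$ with the integral in the Stokes step; this is routine given the $C^1$-smoothness of $f_i$ in $x_i$ (Assumption~\ref{assump:convex}(b)) and the boundedness in item (e), but tracking all the normalization constants to land on the $d_i/\delta$ factor is the only nontrivial piece of bookkeeping.
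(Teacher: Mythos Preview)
Your proposal is correct and is essentially the argument the paper has in mind: the paper does not give its own proof but defers to \cite[Lemma~C.1]{bravo2018bandit}, noting that it ``follows closely the argument in \cite{flaxman2004online}.'' Your derivation---establishing the Flaxman--Kalai--McMahan sphere/ball identity via the divergence theorem on the moving ball, then conditioning on $v_{-i}$ and invoking Fubini to reduce to the single-variable case---is exactly that argument, and the bookkeeping for the $d_i/\delta$ factor is right.
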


The path forward is now clear: we interpret Algorithm~\ref{algo:bravo} as stochastic gradient play on the perturbed game defined by the 
losses $f_i^{\delta}$ over the smaller set $(1-\delta)\X$.  
To this end, define the perturbed gradient map \[g^{\delta}(x)=(\nabla_1
f_1^{\delta}(x),\ldots,\nabla_\num f_\num^{\delta}(x)).\] Lemmas~\ref{lem:smooth_game} and \ref{lem:strong_mon} estimate the smoothness and strong monotonicity constants of the perturbed game, thereby allowing us to invoke classical convergence guarantees for stochastic gradient play on the perturbed game. 

\begin{lemma}[Smoothness of the perturbed game]\label{lem:smooth_game}
    For each index $i\in\{1,\ldots, n\}$, the loss $f_i^{\delta}(x)$ is differentiable and the map $x\mapsto \nabla_i f^{\delta}_i(x)$ is $\beta$--Lipschitz continuous. Moreover the following estimate holds,
    \[\|g(x)-g^{\delta}(x)\|\leq \beta \delta \num\qquad \forall x\in \X.\]
\end{lemma}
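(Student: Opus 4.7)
The plan is to reduce everything to the Lipschitz continuity of $\nabla_i f_i$ from Assumption~\ref{assump:convex}(b), treating the expectation defining $f_i^\delta$ as an averaging operation that preserves both differentiability and the Lipschitz modulus.

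First I would establish that $f_i^\delta$ is differentiable with
\[
\nabla_i f_i^\delta(x) = \E_{w\sim U_i}\,\nabla_i f_i(x_i+\delta w_i,\, x_{-i}+\delta w_{-i}).
\]
Since $\nabla_i f_i$ is continuous and the domain of integration $\mathbb{B}_i\times\bigtimes_{j\neq i}\mathbb{S}_j$ is compact, the integrand and its partial derivative are bounded, so the standard dominated-convergence argument permits differentiation under the expectation.

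Next I would derive the $\beta$--Lipschitz estimate for $\nabla_i f_i^\delta$ by a direct application of Jensen's inequality:
\[
\|\nabla_i f_i^\delta(x)-\nabla_i f_i^\delta(x')\|\leq \E_w\|\nabla_i f_i(x+\delta w)-\nabla_i f_i(x'+\delta w)\|\leq \beta\|x-x'\|,
\]
where the last inequality uses Assumption~\ref{assump:convex}(b) and the fact that $(x+\delta w)-(x'+\delta w)=x-x'$.

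For the quantitative bound $\|g(x)-g^\delta(x)\|\leq \beta\delta n$, I would estimate each coordinate block separately. Writing
\[
\nabla_i f_i(x)-\nabla_i f_i^\delta(x)=\E_w\bigl[\nabla_i f_i(x)-\nabla_i f_i(x+\delta w)\bigr],
\]
Jensen's inequality combined with the $\beta$--Lipschitz property of $\nabla_i f_i$ yields $\|\nabla_i f_i(x)-\nabla_i f_i^\delta(x)\|\leq \beta\delta\,\E_w\|w\|$. Under the distribution $U_i$, $w_i$ lies in the unit ball and each $w_j$ with $j\neq i$ lies on the unit sphere, so $\|w\|^2\leq n$ deterministically and thus $\E_w\|w\|\leq \sqrt{n}$. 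Summing squared norms over $i$ gives $\|g(x)-g^\delta(x)\|^2\leq n\cdot(\beta\delta\sqrt{n})^2=\beta^2\delta^2 n^2$, i.e.\ the claimed bound.

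There is no real obstacle here; the only mildly delicate step is justifying differentiation under the integral, but this is immediate from compactness of the sampling domain and the $C^1$ smoothness in Assumption~\ref{assump:convex}(b). Note that this lemma does not require the Jacobian-Lipschitz condition (c); that stronger assumption is reserved for the strong-monotonicity estimate in Lemma~\ref{lem:strong_mon}.
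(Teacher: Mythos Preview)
Your proposal is correct and follows essentially the same route as the paper: pull the norm inside the expectation via Jensen, invoke the $\beta$--Lipschitz bound on $\nabla_i f_i$ from Assumption~\ref{assump:convex}(b), use the deterministic bound $\|w\|\leq\sqrt{n}$ under $U_i$, and aggregate the $n$ coordinate blocks. The only difference is that you explicitly justify differentiation under the integral (which the paper takes for granted) and correctly flag that item~(c) is not needed here.
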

\begin{proof} For any points $x,x'\in \X$, we successively estimate
\begin{align*}
    \|\nabla_i f_i^\query(x)-\nabla_if_i^\query(x')\|& \leq 
   \E_{w_i\sim U_i}\left[\|\nabla_i f_i(x_i+ \delta w_i,x_{-i}+ \delta
   w_{-i})\right.\left.-\nabla_i f_i(x_i'+ \delta w_i,x_{-i}'+\delta w_{-i})\|\right]\leq \beta \|x-x'\|.
\end{align*}
Thus $\nabla_i f_i^\query$ is $\beta$-Lipschitz continuous. Next, we estimate
\begin{align*}
    \|\nabla_i f_i(x)-\nabla_i f^{\delta}_i(x)\|&\leq \E_{w\sim U_i}\left[
\|\nabla_i f_i(x+\delta w)-\nabla_i f_i(x)\|\right]\\
&\leq\beta \E_{w\sim U_i}
\sqrt{\|w_i\|^2+(n-1)}\\
&\leq \beta\delta\sqrt{\num}.
\end{align*}
Therefore we deduce 
\[\|g(x)-g^{\delta}(x)\|=\sqrt{\sum_{i=1}^\num \|\nabla_i f_i(x)-\nabla_i
f^{\delta}_i(x)\|^2}\leq \beta\delta \num,\]
as claimed.
\end{proof}

Observe that in the single player case $\num=1$, the function $f^{\delta}$ is trivially $\alpha$-strongly convex for any $\delta>0$. 
For general $\num>1$, the following lemma shows that the perturbed map $g^{\delta}$ is strongly monotone for all sufficiently small $\delta$.

\begin{lemma}[Strong monotonicity of the smoothed game]\label{lem:strong_mon} 
Choose  $\delta\leq \frac{c\alpha}{L \num^{3/2}}$  for some constant $c\in (0,1)$. Then the gradient map $g^{\delta}$ is strongly monotone over $\X$ with parameter $(1-c)\alpha$. 
\end{lemma}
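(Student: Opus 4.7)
The plan is to decompose $g^\delta = g + (g^\delta - g)$ and transfer $\alpha$-strong monotonicity from $g$ onto $g^\delta$ at the cost of a Lipschitz bound on the residual. Cauchy--Schwarz gives, for any $x, x' \in \X$,
\[
\langle g^\delta(x) - g^\delta(x'), x - x'\rangle \geq \alpha \|x - x'\|^2 - \|(g^\delta - g)(x) - (g^\delta - g)(x')\| \cdot \|x - x'\|,
\]
so it suffices to prove that $g^\delta - g$ is Lipschitz continuous on $\X$ with constant at most $c\alpha$.

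The core of the argument is to estimate the coordinate residuals $h_i(x) := \nabla_i f_i^\delta(x) - \nabla_i f_i(x) = \E_{w\sim U_i}[\nabla_i f_i(x + \delta w) - \nabla_i f_i(x)]$, and this is the step where the second-order smoothness Assumption~\ref{assump:convex}(c) is essential (and why it was absent from Lemma~\ref{lem:smooth_game}, which needed only first-order smoothness). The fundamental theorem of calculus gives
\[
\nabla_i f_i(x + \delta w) - \nabla_i f_i(x) = \int_0^1 \nabla(\nabla_i f_i)(x + s\delta w)\,(\delta w)\, ds,
\]
and subtracting the analogous expression at $x'$ for each fixed $w$, the $L$-Lipschitz continuity of the Jacobian yields the pointwise bound $L\delta \|w\|\|x-x'\|$. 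Taking expectation over $w \sim U_i$ then gives $\|h_i(x) - h_i(x')\| \leq L\delta\, \E_w[\|w\|] \cdot \|x - x'\|$.

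The last step is to bound $\E_w[\|w\|]$ and aggregate across players. Since $\|w_i\|\leq 1$ and $\|w_j\| = 1$ for $j\neq i$, the triangle inequality gives $\|w\| \leq \num$ almost surely, hence $\|h_i(x) - h_i(x')\| \leq L\delta \num\|x-x'\|$ for every $i$. Squaring and summing over the $\num$ players produces
\[
\|(g^\delta - g)(x) - (g^\delta - g)(x')\| \leq L\delta \num^{3/2}\|x-x'\|,
\]
and the hypothesis $\delta \leq c\alpha/(L\num^{3/2})$ makes this at most $c\alpha\|x-x'\|$. Substituting back into the first display yields the claimed $(1-c)\alpha$ strong monotonicity.

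The main issue to navigate is the bookkeeping of factors of $\num$: applying Jensen's inequality to $\|w\|$ would give $\E[\|w\|]\leq \sqrt{\num}$, while the triangle inequality gives only $\num$, and the subsequent square-sum across players adds another $\sqrt{\num}$. Which of these estimates one uses determines the exponent on $\num$ in the smallness condition on $\delta$; the $\num^{3/2}$ appearing in the statement is exactly what the cruder triangle-inequality route produces, while the cleaner Jensen bound would even allow $\num$ in place of $\num^{3/2}$.
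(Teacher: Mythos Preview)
Your argument is correct and mirrors the paper's: both define the residual $H_i = \nabla_i f_i^\delta - \nabla_i f_i$, use the $L$-Lipschitz Jacobian assumption to show each $H_i$ is Lipschitz with constant of order $\delta L\sqrt{\num}$, and absorb this perturbation into the $\alpha$-strong monotonicity of $g$. One small remark on your bookkeeping: since $w=(w_1,\ldots,w_\num)$ is a concatenation, one has $\|w\|^2=\sum_j\|w_j\|^2\leq \num$ and hence $\|w\|\leq\sqrt{\num}$ almost surely (this is exactly what the paper uses---no Jensen or triangle inequality is needed); with that sharper bound your square-sum aggregation actually yields the Lipschitz constant $L\delta\,\num$ rather than $L\delta\,\num^{3/2}$, confirming the suspicion in your closing paragraph.
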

\begin{proof}
Fix an index $i$ and let us first estimate the Lipschitz constant of the difference map 
\[H_i(x):= \nabla_if_i^\query(x)-\nabla_if_i(x).\]
To this end, we compute 
$$\nabla H_i(x)=\E_{w\sim U_i}  [\nabla (\nabla_i f_i)(x+\delta w)-
\nabla(\nabla_i f_i)(x)].$$
Taking into account that the map $x\mapsto \nabla(\nabla_i f_i)(x)$ is $L$-Lipschitz continuous, we deduce 
\begin{align*}
    \|\nabla H_i(x)\|_{\rm op}& \leq \E_{w\sim U_i}\left[\|(\nabla (\nabla_i
f_i)(x+\delta w)- \nabla(\nabla_i f_i)(x))\|_{\rm op}\right]\leq \delta L
\E_{w\sim U_i}\|w\|\leq \delta L\sqrt{\num}.
\end{align*}
Thus the map $H_i$ is Lipschitz continuous with parameter $\delta L\sqrt{\num}$. We therefore compute
\begin{align*}
    \la {g}^\query({x})-{g}^\query({x'}),x-{x'}\ra&= \sum_{i=1}^\num \langle \nabla_i  f_i^{\delta}(x)-\nabla_i f_i^{\delta}(x'),x_i-x_i'\rangle\\
    &=\sum_{i=1}^\num \langle \nabla_i f_i(x)-\nabla_i
    f_i(x'),x_i-x_i'\rangle-\sum_{i=1}^\num\langle H_i(x')-H_i(x),x_i-x'_i \rangle\\
    &\geq (\alpha-L \num^{3/2}\query)                \|x-x'\|^2.
\end{align*}
The proof is complete.
\end{proof}

The last ingredient, summarized in Lemma~\ref{lem:key_perturb}, is to estimate the distance between the Nash equilibria of the original and the perturbed games. Henceforth, let $x_\delta^{\star}$ be the Nash equilibrium of the game with losses $f_i^{\delta}$ over the set $(1-\delta)\X$. 

\begin{lemma}[Distance between equilibria]\label{lem:key_perturb}
Choose any  $\delta< \min\left\{r,\frac{\alpha}{L \num^{3/2}}\right\}$. Then the following estimate holds, 
\[\|x^\star-x_{\delta}^{\star}\|\leq
\delta\left(\left(1+\frac{\beta\sqrt{\num}}{\alpha}\right)\|x^\star\|+\frac{{\beta}\num}{\alpha}\right).\]
\end{lemma}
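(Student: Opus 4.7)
My plan is to exploit the variational inequality (VI) characterizations of both Nash equilibria and bridge the two feasible sets $\X$ and $(1-\delta)\X$ via a simple scaling trick. Concretely, $x^\star$ satisfies $\la g(x^\star), y - x^\star\ra \geq 0$ for all $y \in \X$, and $x_\delta^\star$ satisfies $\la g^\delta(x_\delta^\star), z - x_\delta^\star\ra \geq 0$ for all $z \in (1-\delta)\X$. I would then combine these with $\alpha$-strong monotonicity of $g$ on $\X$ and the closeness estimate $\|g-g^\delta\|\leq \beta\delta n$ from Lemma~\ref{lem:smooth_game}.

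The first step introduces the auxiliary point $\tilde{x} := (1-\delta)x^\star$. Since $x^\star \in \X$, we have $\tilde x \in (1-\delta)\X$, and since $x_\delta^\star \in (1-\delta)\X$, the rescaled point $\tfrac{1}{1-\delta}x_\delta^\star$ lies in $\X$. Plugging $\tilde x$ into the VI for $x_\delta^\star$ and $\tfrac{1}{1-\delta}x_\delta^\star$ into the VI for $x^\star$ (and clearing the factor $1-\delta$) yields
\[
\la g^\delta(x_\delta^\star), \tilde x - x_\delta^\star\ra \geq 0 \quad \text{and} \quad \la g(x^\star), \tilde x - x_\delta^\star\ra \leq 0.
\]
Subtracting the second from the first produces the key comparison $\la g(x^\star) - g^\delta(x_\delta^\star),\, \tilde x - x_\delta^\star\ra \leq 0$.

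With that in hand, I would apply $\alpha$-strong monotonicity of $g$ to the pair $\tilde x, x_\delta^\star \in \X$ (noting $(1-\delta)\X \subseteq \X$ since $0 \in \X$) to obtain $\alpha \|\tilde x - x_\delta^\star\|^2 \leq \la g(\tilde x) - g(x_\delta^\star), \tilde x - x_\delta^\star\ra$, and then decompose the right side by adding and subtracting $g(x^\star)$ and $g^\delta(x_\delta^\star)$. The cross term is nonpositive by the key comparison; the remaining two terms are bounded by Cauchy--Schwarz, using that each $\nabla_i f_i$ is $\beta$-Lipschitz (so $g$ is globally $\beta\sqrt{n}$-Lipschitz, giving a $\beta\sqrt{n}\,\delta\|x^\star\|$ factor from $\|\tilde x - x^\star\| = \delta\|x^\star\|$) together with Lemma~\ref{lem:smooth_game} (giving the $\beta\delta n$ factor). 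Dividing through by $\|\tilde x - x_\delta^\star\|$ and finally applying the triangle inequality $\|x^\star - x_\delta^\star\| \leq \delta\|x^\star\| + \|\tilde x - x_\delta^\star\|$ assembles exactly the claimed bound.

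The main obstacle is establishing the key comparison, because the two VIs are posed on different sets and a direct use of the VI for $x^\star$ at $\tilde x$ would give a useless sign. The rescaling $\tfrac{1}{1-\delta}x_\delta^\star \in \X$ is exactly what converts the VI for $x^\star$ into an inequality valid at $\tilde x$; once that bridge is in place the rest is routine add/subtract manipulation using Lemma~\ref{lem:smooth_game} and the Lipschitz constant of $g$.
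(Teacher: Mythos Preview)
Your proof is correct and actually a bit slicker than the paper's. Both arguments hinge on the same scaling observation---that the homothety $x\mapsto (1-\delta)x$ carries $\X$ onto $(1-\delta)\X$---and on the same two quantitative inputs: the $\beta\sqrt{\num}$-Lipschitz constant of $g$ and the bound $\|g-g^\delta\|\leq \beta\delta\num$ from Lemma~\ref{lem:smooth_game}. The difference is in how the bridge between the two feasible sets is built. The paper introduces an \emph{auxiliary equilibrium} $\tilde x$, namely the Nash point of the original losses $f_i$ over the shrunken set $(1-\delta)\X$, and then bounds $\|x^\star-\tilde x\|$ and $\|\tilde x-x_\delta^\star\|$ separately; the first of these uses a normal-cone inclusion argument together with the identity $N_{(1-\delta)\X}((1-\delta)x^\star)=N_{\X}(x^\star)$. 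You instead take $\tilde x=(1-\delta)x^\star$ to be a mere \emph{point} (not an equilibrium), push the rescaled point $\tfrac{1}{1-\delta}x_\delta^\star$ through the VI for $x^\star$, and combine the two VIs into the single sign condition $\la g(x^\star)-g^\delta(x_\delta^\star),\tilde x-x_\delta^\star\ra\leq 0$. This lets you invoke strong monotonicity of $g$ (rather than $g^\delta$) directly on the pair $\tilde x,x_\delta^\star$, so Lemma~\ref{lem:strong_mon} is never used in the estimate itself---the hypothesis $\delta<\alpha/(L\num^{3/2})$ enters only to guarantee that $x_\delta^\star$ is well defined. Your route avoids the intermediate equilibrium and the normal-cone calculus entirely, at no cost in the final constant.
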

\begin{proof}
Lemma~\ref{lem:strong_mon}  and our choice of $\delta$ ensures that $g^{\delta}$ is strongly monotone and therefore that $x_\delta^{\star}$ is well-defined. There are two sources of perturbation: one replacing $\X$ with $(1-\delta)\X$ and the other in replacing $f_i$ with $f_i^{\delta}$. We deal with these in turn. To this end, set $\gamma:=1-\delta$ and let $\tilde x$ be the Nash equilibrium of the original game defined by the losses $f_i$ over the shrunken set $\gamma\X$. Thus
$\tilde x$ satisfies the inclusion \[0\in g(\tilde x)+N_{\gamma\X}(\tilde x),\]
where $N_{\gamma\X}(\tilde x)$ denotes the normal cone to $\gamma\X$ at $\tilde x$. The triangle inequality directly gives
\begin{equation}\label{eqn:main_bd}
\|x^\star-x_{\delta}^{\star}\|\leq \|x^\star-\tilde x\|+\|\tilde x-x_{\delta}^{\star}\|.
\end{equation}
Let us bound the first term on the right side of \eqref{eqn:main_bd}. To this
end, since the map $x\mapsto g(x)+N_{\gamma \X}(x)$ is $\alpha$-strongly monotone, we deduce
\begin{equation}\label{eqn:sr_mon}
\alpha\|\tilde x-\gamma x^\star\|\leq \textrm{dist}(0,g(\gamma x^\star)+N_{\gamma\X}(\gamma x^\star)).
\end{equation}
Let us estimate the right-hand side of \eqref{eqn:sr_mon}. Since $x^\star$ is a Nash
equilibrium of the original game over $\X$, the inclusion \[0\in
g(x^\star)+N_{\X}(x^\star)\] holds. Taking into account the identity $N_{\gamma\X}(\gamma x^\star)=N_{\X}( x^\star)$, we deduce
\begin{align*}
d(0,g(\gamma x^\star)+N_{\gamma\X}(\gamma x^\star))&= d(0,g(\gamma
x^\star)+N_{\X}(x^\star))\\
&\leq
\|g(\gamma x^\star)-g(x^\star)\|\\
&\leq \delta \beta\sqrt{\num}\|x^\star\|,
\end{align*}
where the last inequality follows from $g$ being $\beta\sqrt{\num}$-Lipschitz continuous.
Appealing to \eqref{eqn:sr_mon} and using the triangle inequality, we therefore deduce 
\begin{equation}    \label{eqn:tildestar}
\|x^\star-\tilde x\|\leq \|\tilde x-\gamma x^\star\|+\delta\|x^\star\|\leq
\delta(1+\tfrac{\beta\sqrt{\num}}{\alpha})\|x^\star\|.
\end{equation}
It remains to upper bound $\|\tilde x-x_{\delta}^{\star}\|$. The definition of $\tilde x$ as a Nash equilibrium ensures
\begin{equation}
\label{eq:original_nash_ineq}
\la -g(\tilde x),x-\tilde x\ra\leq 0,\ \ \forall \ x\in \gamma\mc{X}.
\end{equation}
Analogously, the definition of $x_\query^\ast$ as a Nash equilibrium ensures
\begin{equation}
\label{eq:smoothed_nash_ineq}
\la -g^\query(x^\star_\query),x-x^\star_\query\ra\leq 0,\ \ \forall \ x\in \gamma\mc{X}.  
\end{equation}
Then by strong monotonicity of the game and estimates~\eqref{eq:original_nash_ineq} and \eqref{eq:smoothed_nash_ineq}, we get
\begin{align*}
    \alpha\|\tilde x-x^\star_{\delta}\|^2&\leq \la g(\tilde x)-g(x^\star_{\delta}),\tilde x-x^\star_{\delta}\ra\\
    &\leq \la g^\query(x^\star_{\delta})-g(x^\star_{\delta}),\tilde x-x^\star_\query\ra\\
    &\leq \|g^\query(x^\star_\delta)-g(x^\star_{\delta})\|\cdot \|\tilde x-x^\star_{\delta}\|\\
    &\leq \beta\query \num\|\tilde x-x^\star_{\delta}\|,
\end{align*}
where the last inequality follows from Lemma~\ref{lem:smooth_game}.
Rearranging, we conclude \[\|\tilde x-x^\star_{\delta}\|\leq
\frac{{\beta\query}\num}{\alpha},\] which combined with \eqref{eqn:main_bd} and \eqref{eqn:tildestar} completes the proof.
\end{proof}

With the observations that the game defined by smoothed functions $f_i^\query$ is strongly monotone and the individual gradients of the smoothed loss functions are Lipschitz, we arrive at the following efficiency guarantee.

\begin{theorem}\label{thm:main}
	Suppose $\delta\leq \min\{r,\frac{\alpha}{2L \num^{3/2}}\}$ and set $\eta_t=\frac{2}{\alpha t}$. Then the following holds,
    \begin{align*}
    \E[ \|x^{t}-x^\star\|^2]&\leq  \frac{\max\{\delta^2\alpha^2
        \|x^{1}-x^\star_{\delta}\|^2,8F_* d^2
    \num\}}{\delta^2\alpha^2t}+2\delta^2\left(\left( 1+\frac{\beta\sqrt{\num}}{\alpha} \right)\|x^\star\|+\frac{{\beta}\num}{\alpha}\right)^2.
    \end{align*}
\end{theorem}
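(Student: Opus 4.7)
The plan is to view Algorithm~\ref{algo:bravo} as stochastic projected gradient play on the smoothed game with losses $f_i^{\delta}$ over $(1-\delta)\X$. The hypothesis $\delta \leq \alpha/(2L\num^{3/2})$ lets us apply Lemma~\ref{lem:strong_mon} with $c=1/2$, so $g^{\delta}$ is $(\alpha/2)$-strongly monotone on $\X$ and the smoothed game admits a unique Nash equilibrium $x^\star_{\delta} \in (1-\delta)\X$. The proof then splits in two: a standard $O(1/t)$ SGD rate toward $x^\star_{\delta}$ for the iterates of Algorithm~\ref{algo:bravo}, combined with the perturbation estimate of Lemma~\ref{lem:key_perturb} via
\[\|x^t - x^\star\|^2 \leq 2\|x^t - x^\star_{\delta}\|^2 + 2\|x^\star_{\delta} - x^\star\|^2.\]

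For the SGD step, I start from $x^{t+1} = \proj_{(1-\delta)\X}(x^t - \eta_t \widehat{g}_t)$, use nonexpansiveness of the projection together with $x^\star_{\delta} \in (1-\delta)\X$, and expand to obtain
\[\|x^{t+1} - x^\star_{\delta}\|^2 \leq \|x^t - x^\star_{\delta}\|^2 - 2\eta_t \la \widehat{g}_t, x^t - x^\star_{\delta}\ra + \eta_t^2 \|\widehat{g}_t\|^2.\]
Taking conditional expectation given $x^t$, the unbiased estimator lemma gives $\E[\widehat{g}_t \mid x^t] = g^{\delta}(x^t)$, and the $(\alpha/2)$-strong monotonicity of $g^{\delta}$ produces the recursion
\[\E\|x^{t+1} - x^\star_{\delta}\|^2 \leq (1 - \alpha\eta_t)\,\E\|x^t - x^\star_{\delta}\|^2 + \eta_t^2\,\E\|\widehat{g}_t\|^2.\]
The variance bound $\E\|\widehat{g}_t\|^2 \leq F_\ast d^2 \num/\delta^2$ follows directly from $\|\hat g_i^t\| = (d_i/\delta)|f_i|$, the uniform bound $|f_i|^2 \leq F_\ast$, and the crude estimate $\sum_i d_i^2 \leq \num d^2$.

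Substituting $\eta_t = 2/(\alpha t)$ reduces the recursion to $\E\|x^{t+1}-x^\star_{\delta}\|^2 \leq (1-2/t)\,\E\|x^t - x^\star_{\delta}\|^2 + 8 F_\ast d^2 \num/(\alpha^2 \delta^2 t^2)$. A routine induction, using $1/(t+1) \geq 1/t - 1/t^2$, then gives $\E\|x^t - x^\star_{\delta}\|^2 \leq C/t$ with $C = \max\{\|x^1-x^\star_{\delta}\|^2,\, 8 F_\ast d^2 \num/(\alpha^2 \delta^2)\}$, matching the first term in the stated bound after clearing $\delta^2 \alpha^2$. Feeding this back through the triangle inequality above and invoking Lemma~\ref{lem:key_perturb} (whose hypothesis $\delta < \min\{r,\alpha/(L\num^{3/2})\}$ is implied by our assumption on $\delta$) produces the residual squared term. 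The main subtlety I expect is calibrating the step size $\eta_t = 2/(\alpha t)$ against the \emph{effective} strong monotonicity $\alpha/2$ of the smoothed game, so that the contraction factor becomes exactly $1-2/t$ and the induction closes cleanly with the claimed constant $C$.
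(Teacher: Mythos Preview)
Your proposal is correct and follows essentially the same route as the paper: interpret the algorithm as projected stochastic gradient play on the smoothed game, use Lemma~\ref{lem:strong_mon} with $c=1/2$ together with the variational-inequality optimality of $x^\star_\delta$ to obtain the contraction $(1-\alpha\eta_t)=(1-2/t)$, bound the second moment of $\widehat g_t$ by $F_\ast d^2\num/\delta^2$, close the standard $C/t$ induction, and finish via the decomposition $\|x^t-x^\star\|^2\le 2\|x^t-x^\star_\delta\|^2+2\|x^\star_\delta-x^\star\|^2$ and Lemma~\ref{lem:key_perturb}. The only step you leave implicit that the paper spells out is the inequality $\langle g^\delta(x^\star_\delta),\,x^t-x^\star_\delta\rangle\ge 0$, which is what upgrades strong monotonicity of $g^\delta$ to the one-sided bound $\langle g^\delta(x^t),x^t-x^\star_\delta\rangle\ge(\alpha/2)\|x^t-x^\star_\delta\|^2$.
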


\begin{proof}
Using the inequality $(a+b)^2\leq 2a^2+2 b^2$ and Lemma~\ref{lem:key_perturb}, we estimate the one step progress
\begin{align*}
\frac{1}{2}\|x^{t+1}-x^\star\|^2&\leq \|x^{t+1}-x^\star_{\delta}\|^2+\|x^\star_\delta-x^\star\|^2\\
&\leq \|x^{t+1}-x^\star_{\delta}\|^2+
\delta^2\left(\left(1+\frac{\beta\sqrt{\num}}{\alpha}\right)\|x^\star\|+\frac{{\beta}\num^{3/2}}{\alpha}\right)^2.
\end{align*}
Next we continue with the standard estimate using non-expansiveness of the projection:
\begin{equation}\label{eqn:standard}
\begin{aligned}
 \E[\|x^{t+1}-x^\star_{\delta}\|^2]&=\E\left[\Big\|\proj_{(1-\delta)\X}(x^t-\eta_t\widehat{g}_{t}(x^t))-x^\star_{\delta}\Big\|^2\right]\\
 &\leq \E[\|x^t-x^\star_{\delta}-\eta_t\widehat{g}_{t}(x^t)\|^2]\\
 &\leq \E[\|x^t-x_{\delta}^*\|^2]+\eta^2_t\E[\|\widehat{g}_{t}(x^t)\|^2]-2\eta_t\E[\la  \widehat{g}_{t}(x^t),x^t-x^\star_{\delta}\ra] \\
  &\leq \E[\|x^t-x_{\delta}^*\|^2]-2\eta_t[\la
  g^{\delta}(x^t),x^t-x^\star_{\delta}\ra]+\eta^2_t\E[\|\widehat{g}_{t}(x^t)\|^2].
\end{aligned}
\end{equation}
Since $g^{\delta}$ is strongly monotone with parameter $\frac{\alpha}{2}$ (Lemma~\ref{lem:strong_mon}) and $x_{\delta}^*$ is the Nash equilibrium of the smoothed game, we deduce
\[\la  g^{\delta}(x^t),x^t-x^\star_{\delta}\ra\geq \la
    g^{\delta}(x^t)-g^{\delta}(x_{\delta}^*),x^t-x^\star_{\delta}\ra\geq
\frac{\alpha}{2}\|x^t-x^\star_{\delta}\|^2.\]
Returning to \eqref{eqn:standard}, we conclude
\[\E[ \|x^{t+1}-x^\star_{\delta}\|^2]\leq (1-\alpha\eta_t)\E\|x^t-x_{\delta}^*\|^2+\eta^2_t\frac{F_*d^2 \num}{\delta^2}
.\] A standard inductive argument shows \[\E[ \|x^{t}-x^\star_{\delta}\|^2]\leq
\frac{\max\{\delta^2\alpha^2 \|x^{1}-x^\star_{\delta}\|^2,8F_* d^2
\num\}}{\delta^2\alpha^2t}\] for all $t\geq 1$. The proof is complete.
\end{proof}

\paragraph{\textbf{Main Result.}}
The following is now the formal statement and proof of Theorem~\ref{thm:informal_main}.
\begin{corollary}[Main Result]\label{cor:main}
Fix a target accuracy \[\varepsilon< ((\alpha+\beta\sqrt{\num})R+\beta
\num)^2\cdot\min\{\tfrac{1}{L^2 \num^{3}},\tfrac{4r^2}{\alpha^2}\},\] 
 and set
 \[\delta=\frac{\alpha\sqrt{\varepsilon/4}}{(\alpha+\beta\sqrt{\num})R+{\beta}\num}\]
 and $\eta_t=\frac{2}{\alpha t}$. Then, the estimate 
 $\mathbb{E}[\|x^t-x^\star\|^2]\leq \varepsilon$ holds for all
 \[t\geq  \frac{\max\{32\alpha^4\varepsilon
     R^2,64((\alpha+\beta\sqrt{\num})R+\beta \num)^2F_* d^2
 \num\}}{\alpha^4\varepsilon^2}.\]
In the single player setting $\num=1$, the conclusion holds for any
$\varepsilon<4r^2((1+\frac{\beta}{\alpha}))R+\frac{\beta}{\alpha} )^2$ and
Assumption~\ref{assump:convex} (item c) may be dropped.
\end{corollary}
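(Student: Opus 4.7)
The strategy is to specialize Theorem~\ref{thm:main} to the prescribed $\delta$ and $\eta_t$, and then tune the parameters so that each of the two terms in the resulting error bound is at most $\varepsilon/2$. I would first check that the chosen $\delta$ satisfies the two preconditions of Theorem~\ref{thm:main}, namely $\delta\leq r$ and $\delta\leq \alpha/(2L\num^{3/2})$. Squaring the definition yields
\[
\delta^2=\frac{\alpha^2\varepsilon/4}{\bigl((\alpha+\beta\sqrt{\num})R+\beta \num\bigr)^2},
\]
and the upper bound on $\varepsilon$ assumed in the corollary is designed so that both inequalities hold at once: the $\min$ over $1/(L^2\num^{3})$ and $4r^2/\alpha^2$ corresponds exactly to these two preconditions.

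With Theorem~\ref{thm:main} now in force, the next step is to collapse the bias term $2\delta^2\bigl((1+\beta\sqrt{\num}/\alpha)\|x^\star\|+\beta\num/\alpha\bigr)^2$. Using the inclusion $\X\subseteq R\mathbb{B}$ to bound $\|x^\star\|\leq R$, the parenthesized factor is bounded by $((\alpha+\beta\sqrt{\num})R+\beta \num)/\alpha$. Multiplying by $2\delta^2$ and substituting the expression for $\delta^2$ produces exactly $\varepsilon/2$; the denominator in $\delta^2$ cancels the squared factor perfectly, which is the motivation for the particular constant in the definition of $\delta$.

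The remaining work is to force the variance term $\max\{\delta^2\alpha^2\|x^{1}-x^\star_{\delta}\|^2,8F_* d^2\num\}/(\delta^2\alpha^2 t)\leq \varepsilon/2$. Since both $x^1\in (1-\delta)\X$ and $x^\star_\delta\in (1-\delta)\X$ lie in $R\mathbb{B}$, the crude bound $\|x^1-x^\star_\delta\|^2\leq 4R^2$ suffices. Solving the resulting inequality for $t$ and substituting $\delta^2\alpha^2=\alpha^4\varepsilon/\bigl(4((\alpha+\beta\sqrt{\num})R+\beta \num)^2\bigr)$ recovers the stated iteration count up to absorbable constants in the two branches of the $\max$.

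Finally, for the single-player remark I would observe that when $\num=1$ the smoothed loss $f^\delta$ is automatically $\alpha$-strongly convex for any $\delta>0$, so Lemma~\ref{lem:strong_mon}, and with it the hypothesis on the Jacobian of $\nabla_i f_i$, is not needed. Consequently only the geometric constraint $\delta\leq r$ remains, which translates via the formula for $\delta$ into precisely $\varepsilon<4r^2((1+\beta/\alpha)R+\beta/\alpha)^2$. The whole argument is mechanical; the only real subtlety is matching the $\min$ in the hypothesis on $\varepsilon$ with the two preconditions of Theorem~\ref{thm:main} and then verifying that the specific constant chosen in $\delta$ causes the bias contribution to close at $\varepsilon/2$ exactly.
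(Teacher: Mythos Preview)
Your proposal is correct and follows essentially the same route as the paper's proof: verify that the assumed bound on $\varepsilon$ forces $\delta<r$ and $\delta\le \alpha/(2L\num^{3/2})$, invoke Theorem~\ref{thm:main}, use $\|x^\star\|\le R$ to make the bias term equal $\varepsilon/2$, use $\|x^1-x^\star_\delta\|\le 2R$ in the variance term, and solve for $t$; the single-player comment is handled identically. The only minor difference is bookkeeping---your computation actually yields the slightly tighter first branch $8R^2/\varepsilon$ rather than the stated $32R^2/\varepsilon$, but since the corollary only asserts the estimate for all $t$ exceeding the displayed threshold, this is harmless.
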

\begin{proof}
The assumed upper bound on $\varepsilon$  directly implies $\delta\leq \frac{\alpha}{2L \num^{3/2}}$ and $\delta<r$. An application of Theorem~\ref{thm:main} yields the estimate 
\[\E[ \|x^{t}-x^\star\|^2]\leq  \frac{\max\{\delta^2\alpha^2
    \|x^1-x^{*}_{\delta}\|^2,8F_* d^2
\num\}}{\delta^2\alpha^2t}+\frac{\varepsilon}{2}.\]
Setting the right side to $\varepsilon$, solving for $t$, and using the trivial
upper bound $\|x^1-x^{*}_{\delta}\|\leq 2R$ completes the proof. The claims for
the single player setting $\num=1$ follows by noting that the conclusions of
Lemmas~\ref{lem:strong_mon}--\ref{lem:key_perturb} and Theorem~\ref{thm:main} hold for any $\delta\in (0,r)$, since $f^{\delta}$ is strongly convex for any $\delta>0$.
\end{proof}

Observe that the sequence of actions that each player $i$ is actually
playing is $x_i^t+\delta v_i^t$. The efficiency guarantee in the preceding
corollary remains intact for this sequence with a slightly different choice of parameters. Indeed, 
the following estimate holds: 
    \begin{align*}
\frac{1}{2}\|x^{t+1}+\delta v^{t+1}-x^\star\|^2&\leq
    \|x^{t+1}-x_\delta^\star\|+\|x^\star_\delta-x^\star\|^2+\delta^2\|v^{t+1}\|^2
    \\
    &\leq  \|x^{t+1}-x^\star_{\delta}\|^2+
    \delta^2\left(\left(1+\frac{\beta\sqrt{\num}}{\alpha}\right)\|x^\star\|+\frac{{\beta}\num^{3/2}}{\alpha}+\sqrt{n}\right)^2.
    \end{align*}
Hence, if 
$\delta=
\frac{\alpha\sqrt{\varepsilon/4}}{\left(\left( \alpha+{\beta\sqrt{\num}}
\right)R+{\beta}\num+\alpha\sqrt{n}\right)}$
and 
$\varepsilon<((\alpha+\beta\sqrt{\num})R+\beta
\num+\alpha \sqrt{n})^2\cdot\min\{\tfrac{1}{L^2
\num^{3}},\tfrac{4r^2}{\alpha^2}\}$,
then an analogous argument to what is presented in the proof of
Corollary~\ref{cor:main} holds. In fact, one can choose parameters such that
both guarantees holds as is summarized in the following corollary. 
\begin{corollary}
    Fix a target accuracy 
    $\textstyle\varepsilon<((\alpha+\beta\sqrt{\num})R+\beta
\num)^2\cdot
\min\left\{\frac{1}{L^2n^3},\frac{4r^2}{\alpha^2}\right\}$,
 query radius
 $\delta= \frac{\alpha\sqrt{\varepsilon/4}}{(\alpha+\beta\sqrt{n})R+\beta n)}$,
and step-size $\eta_t=\frac{2}{\alpha t}$. Then, the estimates
$\mb{E}[\|x^t-x^\star\|^2]\leq \varepsilon$ and $\mb{E}[\|x^t+\delta v^t-x^\star\|^2]\leq \varepsilon$ hold for all 
\[t\geq\frac{\max\{32\alpha^4\varepsilon
     R^2,64\cdot ((\alpha+\beta\sqrt{\num})R+\beta
\num+\sqrt{n}\alpha)^2\cdot F_* d^2
 \num\}}{\alpha^4\varepsilon^2}. \]
\end{corollary}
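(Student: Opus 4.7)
The plan is to prove both guarantees by a single application of Theorem~\ref{thm:main}, treating the played iterate $x^t+\delta v^t$ as a bounded perturbation of $x^t$. First I would observe that the estimate on $\E\|x^t-x^\star\|^2$ is essentially Corollary~\ref{cor:main}, so the real work is to handle the extra displacement $\delta v^t$ without inflating the iteration count by more than a constant factor. This forces a slightly more conservative $\delta$ than in Corollary~\ref{cor:main}: the denominator must absorb an additional $\alpha\sqrt{\num}$ arising from the crude bound $\|v^t\|\leq \sqrt{\num}$.

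The key calculation is the displayed inequality already previewed in the discussion preceding the statement. Using the triangle inequality and Lemma~\ref{lem:key_perturb}, I would write
\[\tfrac{1}{2}\|x^{t+1}+\delta v^{t+1}-x^\star\|^2\leq \|x^{t+1}-x^\star_{\delta}\|^2+\delta^2\Bigl(\bigl(1+\tfrac{\beta\sqrt{\num}}{\alpha}\bigr)\|x^\star\|+\tfrac{\beta \num^{3/2}}{\alpha}+\sqrt{\num}\Bigr)^2,\]
and note that the parallel identity \emph{without} the $\delta v^{t+1}$ term is exactly the one used inside the proof of Theorem~\ref{thm:main}. Both guarantees thereby reduce to a single bound on $\E\|x^{t+1}-x^\star_{\delta}\|^2$ plus a $\delta^2$ bias term. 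I would then choose $\delta$ so that the larger of the two bias terms is at most $\varepsilon/4$; using $\|x^\star\|\leq R$, this is precisely the $\delta$ in the statement (with the denominator constant enlarged to the form $(\alpha+\beta\sqrt{\num})R+\beta\num+\alpha\sqrt{\num}$ needed for the played iterate).

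With $\delta$ pinned down, I would invoke Theorem~\ref{thm:main} to bound
\[\E\|x^{t}-x^\star_{\delta}\|^2 \leq \frac{\max\{\delta^2\alpha^2\|x^1-x^\star_{\delta}\|^2,\ 8F_\ast d^2\num\}}{\delta^2\alpha^2 t},\]
substitute $\|x^1-x^\star_{\delta}\|\leq 2R$, and solve for the smallest $t$ driving the total estimate below $\varepsilon$. Plugging in the explicit value of $\delta$ produces the $1/\delta^2$ factor that yields the $((\alpha+\beta\sqrt{\num})R+\beta\num+\sqrt{\num}\alpha)^2\cdot F_\ast d^2\num$ contribution in the iteration bound. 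The assumed upper bound on $\varepsilon$ is exactly what guarantees $\delta<r$ and $\delta\leq \tfrac{\alpha}{2L\num^{3/2}}$, so that Theorem~\ref{thm:main} is applicable.

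The main obstacle is bookkeeping rather than a new idea: one must ensure that a single $\delta$ simultaneously (i) is admissible for Theorem~\ref{thm:main}, (ii) dominates both the unplayed and the played bias terms, and (iii) produces the unified iteration count stated. Since the played-iterate analysis only adds the $\alpha\sqrt{\num}$ summand inside the definition of $\delta$, and the unplayed guarantee from Corollary~\ref{cor:main} is automatically subsumed by this more conservative choice, the combined corollary follows by taking the worse constant in each of the three constraints.
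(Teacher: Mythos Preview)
Your plan matches the paper's: no separate proof is given there, and the corollary is justified by the displayed estimate for the played iterate together with the remark that ``an analogous argument to what is presented in the proof of Corollary~\ref{cor:main} holds,'' which is exactly your reduction to Theorem~\ref{thm:main} after controlling the enlarged bias term. One caveat on bookkeeping: the $\delta$ actually printed in the corollary is the \emph{unenlarged} value from Corollary~\ref{cor:main} (note the stray parenthesis), whereas both your argument and the paper's own preceding discussion require the denominator $(\alpha+\beta\sqrt{\num})R+\beta\num+\alpha\sqrt{\num}$; with the printed $\delta$ the played-iterate bias $2\delta^2(\cdot)^2$ need not lie below $\varepsilon$, so the enlarged choice you propose---which is also what produces the $((\alpha+\beta\sqrt{\num})R+\beta\num+\sqrt{\num}\alpha)^2$ factor in the stated iteration bound---is indeed the correct one.
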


In practice it may be advantageous to not specify $\varepsilon$ at the onset and instead allow the algorithm to run indefinitely. This can be easily achieved without sacrificing efficiency simply by restarting the algorithm periodically while shrinking $\delta$ by a constant fraction. The resulting process and its convergence rate is summarized in the following corollary.

\begin{corollary}[Efficiency without target accuracy]
Define the following constants: 
\[A=\frac{8F_* d^2 \num}{\alpha^2},\ \ 
    B=2\left(\left(1+\frac{\beta\sqrt{\num}}{\alpha}\right)R+\frac{{\beta}\num}{\alpha}\right)^2,~~\]
    \[\delta_1=\min\left\{r,\frac{\alpha}{2L
        \num^{3/2}}\right\},\ \ 
  T_1=\left\lceil\max\left\{\frac{A}{B\delta_{1}^4},
    \frac{4R^2}{B\delta_1^2}\right\}\right\rceil.\]
Fix a fraction $q\in (0,1)$, and set $y^{0}=x^0$ 
and $\eta_t=\frac{2}{\alpha t}$ for each index $t\geq 1$. 
Consider the following process: 
\begin{equation}
\left\{
\begin{aligned}
y^{k}&={\normalfont{\texttt{DFO}}}(y^{k-1}, \{\eta_t\}_{t\geq 1},\delta_k, T_k)\\
\delta_{k+1}&= q\cdot \delta_k\\
T_{k+1}&=\left\lceil\max\left\{\frac{A}{B\delta_{k+1}^4}, \frac{4R^2}{B\delta_{k+1}^2}\right\}\right\rceil
\end{aligned}\right\}.
\end{equation}
For every $k\geq 1$, the iterate $y^{k}$ satisfies
\[\E[\|y^{k}-x^\star\|^2]\leq \varepsilon_k,\] where
$\varepsilon_k:=2B\delta_1^2 q^{2(k-1)}$, while the total number of steps of
Algorithm~\ref{algo:bravo} needed to generate $y^{k}$ is at most
\[1+\frac{1}{2}\log\left(\frac{2B\delta_1^2}{\varepsilon_k}\right)+\frac{4 B A
    q^{-4}\delta_1^2}{q^{-4}-1}\cdot
    \varepsilon^{-2}_k+\frac{8R^2q^{-2}\delta_1^{-2}}{q^{-2}-1}\varepsilon_{k}^{-1}
.\] In the single player setting (i.e., $\num=1$), we may instead set $\delta_1=r$.
\end{corollary}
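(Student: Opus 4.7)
The plan is to invoke Theorem~\ref{thm:main} at each restart stage $k$ to establish $\E[\|y^k-x^\star\|^2]\leq \varepsilon_k$, and then to sum a geometric series of per-stage budgets $T_j$ in order to control the total iteration count.

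I would first check that the hypotheses of Theorem~\ref{thm:main} hold uniformly across stages. Since $q\in(0,1)$, the radii $\delta_k=\delta_1 q^{k-1}$ are strictly decreasing, so the bound $\delta_1\leq\min\{r,\alpha/(2L\num^{3/2})\}$ transfers to every $\delta_k$. Moreover, the set $(1-\delta)\X$ is monotone decreasing in $\delta$, so $y^{k-1}\in (1-\delta_{k-1})\X\subseteq (1-\delta_k)\X$ is a valid initial point for the $k$th call to \texttt{DFO} with radius $\delta_k$; the base case $k=1$ holds by the standing assumption on $x^0$.

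Next, applying Theorem~\ref{thm:main} with $\delta=\delta_k$, $T=T_k$, and initial iterate $y^{k-1}$, together with the crude bounds $\|y^{k-1}-x_{\delta_k}^\star\|\leq 2R$ and $\|x^\star\|\leq R$ (both points lie in $R\mb{B}$), yields
\[
\E[\|y^k-x^\star\|^2]\leq \max\!\left\{\frac{4R^2}{T_k},\ \frac{A}{\delta_k^2\, T_k}\right\}+B\delta_k^2.
\]
The choice $T_k\geq\max\{A/(B\delta_k^4),\,4R^2/(B\delta_k^2)\}$ ensures each term inside the maximum is at most $B\delta_k^2$, so $\E[\|y^k-x^\star\|^2]\leq 2B\delta_k^2=2B\delta_1^2 q^{2(k-1)}=\varepsilon_k$, which proves the first claim.

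For the cumulative complexity, I would bound each $T_j$ by $1+\tfrac{A}{B\delta_1^4}\,q^{-4(j-1)}+\tfrac{4R^2}{B\delta_1^2}\,q^{-2(j-1)}$ via $\lceil x\rceil\leq 1+x$ and $\max\{a,b\}\leq a+b$, then sum over $j=1,\ldots,k$ using the standard bound $\sum_{m=0}^{k-1}r^m\leq r^k/(r-1)$ for $r>1$ applied with $r=q^{-4}$ and $r=q^{-2}$. Rearranging $\varepsilon_k=2B\delta_1^2 q^{2(k-1)}$ gives the identity $q^{-2(k-1)}=2B\delta_1^2/\varepsilon_k$, which I would use both to convert surviving powers of $q^{-1}$ into powers of $\varepsilon_k^{-1}$ and to turn the restart count $k$ into the logarithmic term in $\varepsilon_k$. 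The single-player addendum follows because for $\num=1$ the smoothed loss $f^{\delta}$ is strongly convex for any $\delta>0$ (as noted just before Lemma~\ref{lem:strong_mon}), so the constraint on $\delta_1$ relaxes to $\delta_1=r$. The only real difficulty here is algebraic bookkeeping; all of the analytic content is already packaged inside Theorem~\ref{thm:main}.
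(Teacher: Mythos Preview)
Your proposal is correct and follows essentially the same route as the paper: invoke Theorem~\ref{thm:main} at each restart with the crude bounds $\|y^{k-1}-x_{\delta_k}^\star\|\leq 2R$ and $\|x^\star\|\leq R$ to obtain $\E[\|y^k-x^\star\|^2]\leq 2B\delta_k^2$, and then control $\sum_{j\leq k}T_j$ by replacing the max with a sum, dropping the ceiling at the cost of an additive $1$ per stage, and summing the resulting geometric series in $q^{-4}$ and $q^{-2}$. Your additional checks (that $\delta_k\leq\delta_1$ keeps Theorem~\ref{thm:main} applicable and that $y^{k-1}\in(1-\delta_{k-1})\X\subseteq(1-\delta_k)\X$ is a valid warm start) are details the paper leaves implicit but are indeed needed.
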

\begin{proof}
Theorem~\ref{thm:main} directly guarantees 
	\[\E[ \|y^{k}-x^\star\|^2]\leq \frac{\max\{A, 4R^2\delta_k^2\}}{\delta_k^2
    T_k}+B\delta_k^2\leq 2 B\delta_k^2.\]
    The total number of steps of Algorithm~\ref{algo:bravo} needed to generate
    $y_k$ is bounded as follows:
\begin{align*}
\sum_{i=1}^k T_i&\leq\sum_{i=1}^k \left\lceil\frac{A}{B}\delta_1^{-4}\cdot q^{-4(i-1)}+\frac{4R^2}{B}\delta_1^{-2}\cdot q^{-2(i-1)}\right\rceil
\\
&\leq k+\frac{A\delta_1^{-4}}{B}\sum_{j=0}^{k-1} q^{-4j}+k+\frac{4R^2\delta_1^{-2}}{B} \sum_{j=0}^{k-1}q^{-2j}\\
&\leq k+\frac{A\delta_1^{-4}}{B(q^{-4}-1)}\cdot q^{-4k}+\frac{4R^2\delta_1^{-2}}{B(q^{-2}-1)} q^{-2k}.
\end{align*}	
Rewriting the right-side in terms of $\varepsilon_k$ completes the proof.
\end{proof}

 \section{Discussion and Future Directions}

A promising future research direction is to examine the benefits of using a
two-point (and multi-point) estimate for the gradient.  In a two-point method,
once a random perturbation direction is chosen, two function evaluations are
performed along that direction, for example $f(x+\delta u)-f(x-\delta u)$, as
examined by Agarwal et al.~\cite{agarwal2010optimal} and by Nesterov and Spokoiny~\cite{nesterov2017random}. 
The use of this symmetric two-point estimate can improve the convergence rate
via improving the constants---however, the dependence of the rate on problem
dimension $d$ still is not optimal. 
We remark that this symmetric expression yields an unbiased estimate of the gradient of $f$, and its extension to the game setting is straightforward; i.e., the proof approach in the current paper can be applied.

One point to note when considering the extension of two-point methods to the
 multiplayer (game) setting is to ensure each player can indeed evaluate their
 cost function $f_i(x_i,x_{-i})$ at two points---player $i$ can certainly vary
 its own action $x_i$, but can it do so while keeping the other players' actions
 $x_{-i}$ fixed? This
type of method requires more explicit coordination between players; however, there are 
practical 
settings in machine learning where such coordination is possible. One example is
\emph{multiplayer performative prediction} which is introduced in the recent
work~\cite{narang2022multiplayer}, wherein players have a loss function oracle
and observe their competitors' actions. This would enable players to form
estimates using query responses of the form $f_i(x_i+\delta
u_i,x_{-i})$. For instance, the two point method of \citet{agarwal2010optimal}
takes the form
\[\hat{g}_i=\frac{d_i}{2\delta}\left(f_i(x_i+\delta
u_i,x_{-i})-f_i(x_i-\delta
u_i,x_{-i})\right).\]

The analysis in \cite{duchi2015optimal} improves the rate in
\cite{agarwal2010optimal} significantly by a factor of $\sqrt{d}$, by employing 
a one-sided two-point estimate of the gradient. This introduces additional bias in the gradient estimate 
that they cleverly handle.
It is an interesting direction to adapt this algorithm to the game setting. 
However, due to the asymmetry and hence bias in the estimate of \cite{duchi2015optimal}, a new proof approach is needed. 
We leave this topic to future work.

\bibliographystyle{IEEEtranN}
\bibliography{2022coltrefs}

\end{document}